\pgfplotsset{compat=newest}
\DeclareMathAlphabet{\mymathbb}{U}{BOONDOX-ds}{m}{n}
\newcounter{thms}
\newtheorem{theorem}[thms]{Theorem}
\newtheorem{corollary}[thms]{Corollary}
\newtheorem{lemma}[thms]{Lemma}
\newcounter{defs}
\newtheorem{definition}[defs]{Definition}
\theoremstyle{remark}
\newcommand{\symi}{i}
\newcommand{\symk}{k}
\newcommand{\syml}{\ell}
\newcommand{\symt}{t}
\newcommand{\R}{\mathbb{R}}
\newcommand{\bb}[1]{\mymathbb{#1}}
\newcommand{\mc}[1]{\mathcal{#1}}
\newcommand{\mt}[1]{\textrm{#1}}
\newcommand{\lt}{\left}
\newcommand{\rt}{\right}
\newcommand{\beeq}{\begin{equation}}
\newcommand{\eneq}{\end{equation}}
\newcommand{\matb}{\begin{matrix}}
\newcommand{\mate}{\end{matrix}}
\newcommand{\unaryminus}{\scalebox{0.6}[1.0]{$ - $}}
\DeclareMathOperator*{\argmin}{arg\,min}
\DeclareMathOperator*{\vect}{\mt{vec}}
\title{\LARGE \bf
Robust online joint state/input/parameter estimation of linear systems
}
\author{Jean-Sébastien Brouillon, Keith Moffat, Florian Dörfler, and Giancarlo Ferrari-Trecate% <-this % stops a space
%\thanks{We thank Keith Moffat for valuable feedback and comments.}
\thanks{This research is supported by the Swiss National Science Foundation under the NCCR Automation (grant agreement 51NF40\_180545).}% <-this % stops a space
\thanks{J.S. Brouillon and G. Ferrari-Trecate are with the Institute of Mechanical Engineering, \'Ecole Polytechnique F\'ed\'erale de Lausanne, Switzerland. Email addresses: {\tt\small \{jean-sebastien.brouillon, giancarlo.ferraritrecate\}@epfl.ch}.}%
\thanks{K. Moffat is with the Electrical Engineering and Computer Science Department, UC Berkeley, USA. Email address: {\tt\small keithm@berkeley.edu}.}%
\thanks{F. Dörfler is with the Automatic Control Laboratory, Swiss Federal Institute of Technology (ETH), Switzerland. Email address: {\tt\small dorfler@control.ee.ethz.ch}.}%
}
\begin{document}
\bstctlcite{IEEEexample:BSTcontrol}

\maketitle
\thispagestyle{empty}
\pagestyle{empty}

%%%%%%%%%%%%%%%%%%%%%%%%%%%%%%%%%%%%%%%%%%%%%%%%%%%%%%%%%%%%%%%%%%%%%%%%%%%%%%%%
\begin{abstract}
 This paper presents a method for jointly estimating the state, input, and parameters of linear systems in an online fashion. The method is specially designed for measurements that are corrupted with non-Gaussian noise or outliers, which are commonly found in engineering applications. In particular, it combines recursive, alternating, and iteratively-reweighted least squares into a single, one-step algorithm, which solves the estimation problem online and benefits from the robustness of least-deviation regression methods. The convergence of the iterative method is formally guaranteed. Numerical experiments show the good performance of the estimation algorithm in presence of outliers and in comparison to state-of-the-art methods.

\end{abstract}

%%%%%%%%%%%%%%%%%%%%%%%%%%%%%%%%%%%%%%%%%%%%%%%%%%%%%%%%%%%%%%%%%%%%%%%%%%%%%%%%
\section{Introduction} \label{section_intro}

Reliable control of a dynamical system is often contingent upon an accurate system model and accurate state measurements. When the system dynamics change over time, system identification algorithms can be used to track the parameter variations in the system by using state and input samples. However, measurement noise and inaccurate actuation may strongly degrade the estimation accuracy. In this case, joint state/input and parameter estimation methods are required in order to estimate both the noiseless states/inputs and the unknown parameters. In addition, in order to be useful for control applications, the estimation algorithm must provide all estimates in real time. Online joint state/input and parameter estimation methods are used in power systems, battery management systems, and self-driving cars, among other applications \cite[Section 3.3.3]{power_systems_joint}, \cite{ekf_tuning_beelen2020, ukf_wielitzka2015}. While the constantly growing range of sensing technologies often allows one to measure the complete state of a system, statistical outliers and non-Gaussian noise, introduced by complex sensors, faulty hardware, and cyberattacks, are often present \cite{nongauss_vision_ieng2007, cyber_attack_outliers}.

In these scenarios, very popular estimation algorithms built for input/output (rather than state/input) samples and assuming Gaussian noise distributions can be ineffective. The Extended Kalman Filter (EKF), which has been demonstrated to work well for moderate and Gaussian measurement noise, requires additional constraints and tuning to remain stable when the noise levels increase \cite{ekf_kullberg2021online, ekf_joint_stateparaminput_naets2015, joint_efk_plett2005, ekf_stab_rapp2004, limits_ekf_castellanos2004, ekf_tuning_beelen2020}. Because the joint estimation problem for linear systems is bilinear, the Unscented Kalman Filter (UKF) can achieve higher accuracy as it relies on sigma points approximating Gaussian distributions rather than noise-sensitive linearizations \cite{ukf_wielitzka2015}. However, the sigma-point approximations can still introduce significant inaccuracies for non-Gaussian noise. Recursive Alternating Least Squares and Dual Kalman Filters provide an alternative way to deal with the bilinearity by considering state/input and parameters separately \cite{rals_khouaja2004, dkf_zwerger2020}. Error-in-variables (EIV) approaches, which include Total Least Squares (TLS) methods, can also be used for joint state and parameter estimation of linear systems, as EIV estimators implicitly reconstruct the inputs/states in order to find the parameters that best fit the data \cite{huffel_tls, rhode2014_rgtls}. Among EIV methods, Recursive Total Least Squares (RTLS) has been shown to outperform Kalman filtering for linear systems \cite{rhode2014_rgtls}. Similarly to EIV models, subspace identification uses noisy measurement to provide parameter estimates, which can then be used to find the most likely states and inputs of the system \cite{subspace_van2012, recursive_subspace_mercer}.

The impact of outliers can be minimized using outlier detection techniques to pre-filter the data \cite{outlier_detection_book}. However, they usually amount to solve complex classification problems, which are prohibitive to solve in real time. %This paper focuses on joint estimation that is \emph{robust} to outliers and non-standard noise distributions.
A standard approach to improve robustness against outliers, and that can be adapted to online methods, is to consider tail-heavy noise distributions and/or to add regularization terms such as $\ell_1$ norm penalties \cite{least_deviations_book, bertsimas2018_robust_opt, bar_def, robust_subspace_sadigh2014, robust_ekf_longhini2021, distrirobust_shafieezadeh2019}.% These tail-heavy/regularization approaches are better suited for online implementation.

The non-smoothness of the $\ell_1$ norm creates complicated optimization problems when applied to the standard methods mentioned previously.
ADMM, proximal, and sub-gradient methods have been used to solve optimization problems with $\ell_1$ costs \cite{online_admm, proximal_methods}. However, such approaches are tailored to batch estimation problems. Another method for dealing with $\ell_1$ terms is iterative reweighting, which consists of iteratively approximating a class of cost functions by quadratic ones, and by only relying on the current value of the cost \cite{irls_convergence}. This method is simpler, compared to the aforementioned alternatives, because it does not require a gradient or proximal operator computation.
While convergence proofs exist for Iteratively-Reweighted Least Squares and its variants, they are not valid in a online (i.e. recursive) or alternating setting \cite{irls_convergence, bar_asympto}. 

% This paper provides a state, input and parameter estimation algorithm that addresses the three challenges of online, joint and robust estimation. To do so, we combine the benefits of three state-of-the-art solutions for each problem, respectively: Recursive Least Squares, Alternating Least Squares, and Iteratively-Reweighed Least Squares.  
% First, the recursive nature of the algorithm allows it to run online.
% Second, the alternating portion allows the algorithm to simultaneously estimate both the parameters and the states/inputs.
% Finaly, iterative reweighting allows it to approximate $\ell_1$ terms in the cost function, making the resulting estimate robust to outliers. 

In this paper, we propose a new Alternating and Iteratively-reweighed Recursive Least Squares algorithm (AIRLS) that addresses the three challenges of online, joint and robust estimation. A second contribution is the formal proof of convergence of the method. Finally, we present numerical experiments demonstrating the robustness of AIRLS to outliers. More in details, we show that, in presence of outliers, both EKF and a standard subspace identification method \cite{ref_n4sid} can fail, even for a very simple system. In the same setting, RTLS shows greater robustness, but is outperformed by AIRLS.
%The computationally-efficient AIRLS algorithm performs well, despite the outlier noise, accurately estimating the system parameters, states, and inputs.

The paper is structured as follows: \cref{section_prob} describes the joint estimation problem to be solved. A naive form of robust and online joint estimation is introduced in \cref{section_airls} and then used in \cref{section_conv} to define the AIRLS method and prove its convergence. \cref{section_results} presents the numerical results.

\subsection{Preliminaries and Notation}
\label{subsec:notation}
The $n$-dimensional identity matrix is $I_n \in \R^{n \times n}$, and $\bb 0_n \in \R^n$ and $\bb 1_n \in \R^n$ are the vectors of all zeros and all ones. $[x_i^\top]_{\symi=1}^n$ is the matrix with rows equal to the vectors $x_\symi$. $\vect(\cdot)$ is the column vectorization operator. A proportionality relation is denoted by $\propto$. The $\ell_1$ Frobenius norm is defined as  $\|A\|_{F1} = \|\vect(A)\|_1 = \sum_{\symi} \|[A]_{\symi}\|_1$, where $[A]_{\symi}$ is the $i^{th}$ column of $A$. Similarly, the $\ell_2$ Frobenius norm is $\|A\|_{F} = \|\vect(A)\|_2 = \sqrt{\sum_{\symi} \|[A]_{\symi}\|_2^2}$.

\section{Problem Statement}\label{section_prob}

\subsection{System model}\label{subsec_system_def}

We want to reconstruct the matrices $A$ and $B$ in the model
\begin{align}\label{eq_noisless_model}
    x_{\symt+1} = A x_{\symt} + B u_{\symt},
\end{align}
where $u_\symt \in \R^{n_u}$ are the inputs and $x_\symt \in \R^{n}$ are the states for the time steps $t=1,\dots, N$.
We also want to provide a running estimate of $x_\symt$ and $u_\symt$ from measurements $\tilde x_\symt$ and $\tilde u_\symt$ that are corrupted by additive noise $\Delta x_\symt$ and $\Delta u_\symt$, i.e.
\begin{subequations}
\begin{align}
    \tilde x_{\symt} = x_{\symt} + \Delta x_{\symt},\\
    \tilde u_{\symt} = u_{\symt} + \Delta u_{\symt}.
\end{align}
\end{subequations}
Note that we do not assume any specific probability distribution for the noise.

%\section{Background}\label{section_back}

%\subsection{Correlation matrix}\label{subsec_batch_prob}

%The observation at time $\symt+1$ consists of $n$ state measurements at time $\symt+1$, $n$ state measurements at time $\symt$, and $n_u$ input measurements at time $\symt$.
A recursive algorithm for solving this problem must be based on a fixed-size data matrix and a constant number of parameters. To this purpose, a common approach is to replace the observations $[x_{\symt+1}, x_\symt, u_\symt]$ by their discounted empirical correlation matrix
\begin{align}\label{eq_autocorr_def}
    %C_\symt =\! \lt[\!\beta^{\symt \unaryminus 1}\!\! \lt[\matb {x}_1 \\ {x}_0 \\  u_{0} \mate\rt]\!\!, \dots,\!\! \lt[\matb  x_{\symt+1} \\  x_{\symt} \\  u_{\symt} \mate\rt]\!\rt]\!\! \lt[\! \lt[\matb {x}_1 \\ {x}_0 \\  u_{0} \mate\rt]\!\!, \dots,\!\! \lt[\matb  x_{\symt+1} \\  x_{\symt} \\  u_{\symt} \mate\rt]\!\rt]^{\!\!\!\top}\!\!.
    C_\symt = \sum_{\symi = 0}^\symt \beta^{\symt-\symi} \Gamma_\symi \quad: \quad \Gamma_\symi = \lt[\matb  x_{\symi+1} \\  x_{\symi} \\  u_{\symi} \mate\rt] \lt[\matb  x_{\symi+1} \\  x_{\symi} \\  u_{\symi} \mate\rt]^\top,
\end{align}
%The matrix $C_\symt \in \R^{m \times m}$, $m=2n+n_u$ is defined as a sum of quadratic forms and \eqref{eq_noisless_model} constrains $x_{\symt+1}$ to be a linear combination of $x_{\symt}$ and $u_\symt$ for all $\symt$. Therefore, under persistent excitation, $C_\symt$ is positive semi-definite of rank $n+n_u$. 
where $1-\beta \in (0, 1)$ is the forgetting factor. In presence of noise, one can only build $\tilde C_\symt$ and $\tilde \Gamma_\symt$ from measurements $\tilde x_0, \dots, \tilde x_{\symt+1}$ and $\tilde u_0, \dots, \tilde u_{\symt}$. We note that, usually, the matrix $\tilde C_\symt$ is full rank due to the noise.

%\subsection{Standard methods}
Next, we briefly review standard online estimation methods. The RTLS algorithms estimates the null space of $C_\symt$ by identifying the eigenvectors of $\tilde C_\symt$ corresponding to its smallest eigenvalues. This can be done with the inverse power method \cite{rhode2014_rgtls}. The null space is formed by vectors $[x_{\symt+1}^\top, x_\symt^\top, u_\symt^\top]^\top$ verifying \eqref{eq_noisless_model} and therefore allows one to retrieve the matrices $A$ and $B$.

Recursive subspace identification and the EKF recursively estimate propagator matrices, which have an expression similar to $C_\symt$, and use their inverse for updating the estimates of $A$ and $B$ \cite{recursive_subspace_mercer, ekf_kullberg2021online}.

\section{Online robust joint state/input and parameter estimation}\label{section_airls}

In this section, we provide a preliminary online estimation algorithm using $\tilde{C}_\symt$ for estimating both $[A,B]$ and $C_\symt$. Moreover, in \cref{section_conv}, we show that the estimate of states and inputs at time $\symt$ can be recovered by projecting the noisy measurements $\tilde x_\symt$ and $\tilde u_\symt$ onto the null space of $[-I_n, \hat A_\symt, \hat B_\symt]$ (see \cref{cor_proj} below).

%\subsection{Partitioned and corrected correlation}\label{subsec_method_intro}

%In order to ensure a rank $n+n_u$ for $C_\symt$, 
Before presenting the algorithm in Sections \ref{subsec_alt_method} and \ref{subsec_algo1}, we need to introduce some notations. We partition $C_\symt = [Y_\symt^\top, Z_\symt^\top]^\top$, where $Y_\symt \in \R^{n \times m}$ and $Z_\symt \in \R^{(n+n_u) \times m}$ are two linearly dependent blocks. Using \eqref{eq_noisless_model} and \eqref{eq_autocorr_def}, this gives
\begin{align}\label{eq_noiseless_model_recursive}
    Y_\symt = E_y C_\symt = [A, B] E_z C_\symt = [A,B] Z_\symt,
\end{align}
where $E_y = [I_n, \bb 0_{n \times (n+n_u)}]$ and $E_z = [\bb 0_{(n+n_u) \times n}, I_{n+n_u}]$. Similar to $C_\symt$, the matrix $\tilde C_\symt$ is partitioned into $\tilde Y_\symt = E_y \tilde C_\symt$ and $\tilde Z_\symt = E_z \tilde C_\symt$.

Because $Y_\symt$ depends on $A$, $B$, and $Z_\symt$, the estimation problem amounts to finding the estimates $\hat A_\symt$, $\hat B_\symt$, and $\hat Z_\symt$ from $\tilde C_\symt$. To improve accuracy, one can use the previous estimate $\hat Z_{\symt-1}$, rather than $\tilde Z_{\symt-1}$. This amounts to using $\underline C_\symt$ rather than $\tilde C_\symt$, which we define recursively as
%This can be done by rewriting \eqref{eq_autocorr_def} for $\underline C_\symt$ in a recursive way:
\begin{align}\label{eq_autocorr_rec}
    \underline C_{\symt+1} = \beta \underline C_{\symt} + \tilde \Gamma_{\symt+1}=  \beta \lt[\matb \tilde Y_{\symt} \\ \hat Z_{\symt} \mate\rt] + \tilde \Gamma_{\symt+1} ,
\end{align}
% We define the corrected correlation matrix $\underline C_\symt$ by replacing $\tilde Z_{\symt-1}$ by $\hat Z_{\symt-1}$ in \eqref{eq_autocorr_rec}. 
where $\underline C_0 \propto I_m$ is arbitrarily chosen. Similar to $\tilde C_\symt$, the matrix $\underline C_\symt$ is partitioned into $\underline Y_\symt = E_y \underline C_\symt = \tilde Y_\symt$ and $\underline Z_\symt = E_z \underline C_\symt$.

% In order to ensure a rank $n+n_u$ for $C_\symt$, we will partition it into two blocks, $Y_\symt \in \R^{n \times m}$ and $Z_\symt \in \R^{(n+n_u) \times m}$. This gives $C_\symt = [Y_\symt^\top, Z_\symt^\top]^\top$.

% \begin{lemma}\label{lem_rec}
% If $x_\symt$ and $u_\symt$ follow \eqref{eq_noisless_model} for all $\symt$, then
% \begin{align}\label{eq_noiseless_model_recursive}
%     Y_\symt = E_y C_\symt = [A, B] E_z C_\symt = [A,B] Z_\symt,
% \end{align}
% where $E_y = [I_n, \bb 0_{n \times (n+n_u)}]$ and $E_z = [\bb 0_{(n+n_u) \times n}, I_{n+n_u}]$.
% \end{lemma}
% \begin{proof}
% We start by rewriting \eqref{eq_noisless_model} for all $\symt$:
% \begin{align}\label{eq_proof_noisless_stack}
%     [x_1, \dots, x_{\symt+1}] = [A,B] \lt[\matb x_0 & \dots & x_{\symt} \\ u_0 & \dots & u_{\symt} \mate\rt],
% \end{align}
% which is equivalent to
% \begin{align}\label{eq_proof_noisless_stack_discount}
%     [\beta^{\symt \unaryminus 1} x_1, \dots, x_{\symt+1}] = [A,B] \!\lt[\beta^{\symt \unaryminus 1}\! \lt[\matb {x}_0 \\  u_{0} \mate\rt]\!, \dots,\! \lt[\matb  x_{\symt} \\  u_{\symt} \mate\rt]\rt]\!.
% \end{align}
% If we post-multiply both sides of the equality by $\lt[\lt[\matb {x}_1 \\ {x}_0 \\  u_{0} \mate\rt]\!\!, \dots,\!\! \lt[\matb  x_{\symt+1} \\  x_{\symt} \\  u_{\symt} \mate\rt]\!\rt]^{\!\!\!\top}$, we obtain
% \begin{align}\label{eq_proof_corr}
%     E_y C_\symt = [A,B] E_z C_\symt.
% \end{align}
% % which proves the lemma.
% \end{proof}

\subsection{A simple alternating joint estimation algorithm}\label{subsec_alt_method}

% Using the least-deviations loss function for its robustness properties, the combined regression of state and parameters on the observations gives
The combined estimation of the state, input and parameters, based on least absolute deviations (i.e. the $\ell_1$ norm loss function) is given by \cite{least_deviations_book}
\begin{align}\label{eq_regression_batch}
    &\hat A_\symt, \hat B_\symt, \hat Z_\symt = \\ \nonumber &\quad
    \argmin_{A, B, Z} \|[A, B] Z - \underline Y_\symt\|_{F1} + \|Z - \underline Z_\symt\|_{F1}.
\end{align}
In addition to the robustness provided by least absolute deviations, one may want to include a regularization for $\Theta = [A, B]$. In this case, with $\hat \Theta_\symt = [\hat A_\symt, \hat B_\symt]$, \eqref{eq_regression_batch} becomes
\begin{align}\label{eq_regression_batch_reg}
    \hat \Theta_\symt, \hat Z_\symt =
    \argmin_{\Theta, Z}&\; \|\Theta Z - \underline Y_\symt\|_{F1}  + \|Z - \underline Z_\symt\|_{F1} 
    \\ \nonumber
    &\; + {\|\Psi \vect(\Theta) \!-\! \mu\|}_{1}.
\end{align}
where $\Psi$ and $\mu$ are a matrix and a vector that can be chosen to tune the regularization term. We note that the problem \eqref{eq_regression_batch_reg} is %a generalization of
\begin{enumerate}[label=(\roman*)]
    \item equivalent to \eqref{eq_regression_batch} when $\Psi = \bb 0_{n(n+n_u)}^\top$, and $\mu = 0$,
    \item an $\ell_\infty$ robust and $\ell_\infty$ distributionally robust formulation of \eqref{eq_regression_batch} according to \cite[Theorem 3]{bertsimas2018_robust_opt} and \cite[Equation (4)]{multi_dro_chen2020}, when $\Psi = \epsilon I_{n(n+n_u)}$, and $\mu = \bb 0_{n(n+n_u)}$.
\end{enumerate}
Moreover, \eqref{eq_regression_batch_reg} is a maximum \emph{a posteriori} estimation problem with the prior belief that $\Psi \vect(\Theta) \approx \mu$ \cite{cdc_paper_dgs}.

The bilinear term $\Theta Z$ makes the optimization hard to solve. A common approach to circumvent this issue is to use a block coordinate descent method, which consists of an iterative optimization procedure that alternates between optimizing the estimate of $Z$ (for $\Theta$ fixed) and optimizing the estimate of the parameters $\Theta$ (for $Z$ fixed). Hence, $\Theta Z$ is linear in each subproblem. More precisely, the iteration $k$ of the optimization subproblems using the data at time step $t$ are given by

\begin{subequations}\label{eq_regression_batch_dro_final}
\begin{align}
    \label{eq_regression_batch_dro_state_final}
    \!\!\!\!\hat Z_{\symt\symk} \!&= 
    \argmin_{Z} \!\lt\|\!\lt[\matb \hat \Theta_{\symt\symk} \\  I_{n+n_u} \mate \rt]\! Z -\lt[\matb \underline Y_\symt \\  \underline Z_\symt \mate\rt] \!\rt\|_{F1}\!,\!\!\!\\
    \label{eq_regression_batch_dro_param_final}
    \!\!\!\!\hat \Theta_{\symt,\symk+1} \!&= 
    \argmin_{\Theta} \!{\|\Theta \hat Z_{\symt\symk} \!-\! \underline Y_\symt\|}_{F1} \!+\! {\|\Psi \vect(\Theta) \!-\! \mu\|}_{1},\!\!\! 
\end{align}
\end{subequations}

%\subsection{The state update}\label{section_state_upd}

% We will first focus on the optimization problem \eqref{eq_regression_batch_dro_state_final}, which aims to obtain estimates $\hat Z_{\symt\symk}$ based on observations $\underline Y_{\symt}$ and $\underline Z_{\symt}$, and current parameter estimates $\hat \Theta_{\symt\symk}$.

The update \eqref{eq_regression_batch_dro_state_final} estimates $\hat Z_{\symt\symk}$, the $Z$ portion of the correlation matrix at the iteration $k$ of the optimization using the data at time $t$. It does so based on observations $\underline Y_{\symt}$ and $\underline Z_{\symt}$, and the current parameter estimates $\hat \Theta_{\symt\symk}$. The following Lemma shows how \eqref{eq_regression_batch_dro_state_final} can be decomposed into simpler problems.

\begin{lemma}\label{lem_state_upd_sep}
The optimization problem \eqref{eq_regression_batch_dro_state_final} can be split into $n+n_u$ independent optimization problems, each only depending on one column $Z_{\symi}$ of $Z$.

\begin{align}\label{eq_regression_batch_dro_state_split}
    \argmin_{Z_{\symi}} \lt\|\!\lt[\matb \hat \Theta_{\symt\symk} \\ \  I_{n+n_u} \mate \rt]\!\! Z_{\symi} \!-\! \lt[\matb  \underline Y_{\symi, \symt} \\  \underline Z_{\symi, \symt} \mate\rt] \!\rt\|_1\!\!,\!
\end{align}
\end{lemma}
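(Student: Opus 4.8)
The plan is to exploit the column-wise additivity of the $\ell_1$ Frobenius norm recalled in \cref{subsec:notation}, namely $\|A\|_{F1} = \sum_\symi \|[A]_\symi\|_1$, together with the block structure of the matrices appearing in \eqref{eq_regression_batch_dro_state_final}. Once one checks that the $\symi$-th column of the matrix inside that norm depends only on the $\symi$-th column $Z_\symi$ of $Z$, the cost in \eqref{eq_regression_batch_dro_state_final} becomes a sum of terms, each involving a distinct block of the decision variable, so its minimization separates into one independent problem for each column of $Z$, namely the problems \eqref{eq_regression_batch_dro_state_split}.

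First I would write the matrix inside the norm in \eqref{eq_regression_batch_dro_state_final} column by column. Because matrix multiplication and vertical concatenation both act columnwise, the $\symi$-th column of $\lt[\matb \hat\Theta_{\symt\symk} \\ I_{n+n_u} \mate\rt] Z$ is $\lt[\matb \hat\Theta_{\symt\symk} \\ I_{n+n_u} \mate\rt] Z_\symi$, and the $\symi$-th column of $\lt[\matb \underline Y_\symt \\ \underline Z_\symt \mate\rt]$ is $\lt[\matb \underline Y_{\symi,\symt} \\ \underline Z_{\symi,\symt} \mate\rt]$, where $\underline Y_{\symi,\symt}$ and $\underline Z_{\symi,\symt}$ denote the $\symi$-th columns of $\underline Y_\symt = E_y\underline C_\symt$ and $\underline Z_\symt = E_z\underline C_\symt$; these indices are aligned because both blocks are row selections of the same matrix $\underline C_\symt$. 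Applying the definition of $\|\cdot\|_{F1}$ then yields
\begin{align*} \lt\| \lt[\matb \hat\Theta_{\symt\symk} \\ I_{n+n_u} \mate\rt] Z - \lt[\matb \underline Y_\symt \\ \underline Z_\symt \mate\rt] \rt\|_{F1} = \sum_\symi \lt\| \lt[\matb \hat\Theta_{\symt\symk} \\ I_{n+n_u} \mate\rt] Z_\symi - \lt[\matb \underline Y_{\symi,\symt} \\ \underline Z_{\symi,\symt} \mate\rt] \rt\|_1 . \end{align*}

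Finally, the columns $Z_\symi$ partition the decision variable $Z$ into disjoint blocks, and each summand on the right-hand side above involves exactly one of them; hence the minimization over $Z$ of the sum equals the sum of the independent minimizations \eqref{eq_regression_batch_dro_state_split}, which is the claim. No convexity or regularity of the objective is needed for this reduction. The only point that requires a little care — and essentially the sole obstacle here — is verifying that the column indices of $\underline Y_\symt$ and $\underline Z_\symt$ match when the two blocks are stacked; once that bookkeeping is settled, the decomposition is immediate.
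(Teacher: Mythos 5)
Your proof is correct and follows essentially the same route as the paper's: decompose the $\ell_1$ Frobenius norm columnwise, observe that the $\symi$-th column of the residual depends only on $Z_\symi$, and invoke the separability of $\argmin$ over disjoint blocks of variables. The only difference is that you spell out the column-index bookkeeping more explicitly, which the paper leaves implicit.
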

\begin{proof}
From the definition of the $\ell_1$ Frobenius norm, the cost in \eqref{eq_regression_batch_dro_state_final} is composed of $n+n_u$ terms, each depending only on one column of $\lt[\matb \hat \Theta_{\symt\symk} \\ \  I_{n+n_u} \mate \rt]\!\! Z$. This column is equal to $\lt[\matb \hat \Theta_{\symt\symk} \\ \  I_{n+n_u} \mate \rt]\!\! Z_{\symi}$. The proof is concluded by using the distribution property of the $\argmin$, i.e. $\argmin_{x,y} af(x) + bg(y) = \argmin_{x} f(x), \argmin_{y} g(y)$.
\end{proof}

Using \cref{lem_state_upd_sep}, we can update all columns $\hat Z_{\symi,\symt\symk}$ independently using iterative reweighting \cite{irls_convergence}. Given an integer $L_Z$, for an outer iteration $\symk$, iterative reweighting introduces the following $L_Z$ inner iterations indexed by $\syml = 1,\dots,L_Z$ to approximate the $\ell_1$ norm:
\begin{align}\label{eq_regression_batch_dro_state_ir}
    \!\!\hat Z_{\symi,\symt\symk, \syml+1} \!&=\! 
    \argmin_{Z_{\symi}} \lt\|\!\lt[\matb \hat \Theta_{\symt\symk} \\ \  I_{n+n_u} \mate \rt]\!\! Z_{\symi} \!-\! \lt[\matb  \underline Y_{\symi, \symt} \\  \underline Z_{\symi, \symt} \mate\rt] \!\rt\|_{W_{\symi \symt\symk\syml}}^2\!\!, \!\!
    \\ \nonumber \!\!\! 
    \!\!W_{\symi\symt\symk\syml}^{-1} \!&=\! \sqrt{\mt{diag} \!\lt(\lt[\matb
     \hat \Theta_{\symt\symk} \hat Z_{\symi,\symt\symk\syml} \!-\! \underline Y_{\symi, \symt} \\
     \hat Z_{\symi,\symt\symk\syml} \!-\!  \underline Z_{\symi, \symt}
    \mate\rt]\rt)^{\!\!2} + \alpha I_{m}},
\end{align}
where $0 < \alpha \ll 1$ is a small parameter introduced for numerical stability. After the last inner iteration, $\hat Z_{\symt\symk, L_{Z}}$ is used as an approximate solution to \eqref{eq_regression_batch_dro_state_final}. Note that for each time step $\symt$, we now have a double nested loop over $\symk$ and $\syml$, which may be very slow in practice. A remedy for this issue is described in \cref{section_conv}.

%\subsection{The parameter update}\label{section_param_upd}

Next, we analyze the parameter update \eqref{eq_regression_batch_dro_param_final}, which estimates $\hat \Theta_{\symt,\symk+1}$ based on $\hat Z_{\symt\symk, L_{Z}}$. Unlike \eqref{eq_regression_batch_dro_state_final}, the problem  \eqref{eq_regression_batch_dro_param_final} cannot easily be split into sub problems. However, one can vectorize the parameters to simplify \eqref{eq_regression_batch_dro_param_final} using iterative reweighting. %Instead, vectorized variables are needed to provide a closed-form solution. We therefore define the following variables

\begin{align}\label{eq_vect_defs}
    \theta &= \vect(\Theta), \; \hat \theta_{\symt,\symk+1} = \vect(\hat \Theta_{\symt,\symk+1}), \;
    \hat{\bb Z}_{\symt\symk} = \hat Z_{\symt\symk}^\top \otimes I_n
\end{align}
Since $\vect(XY) = (Y^\top \otimes I)\vect(Y)$, \eqref{eq_vect_defs} gives

\begin{align}\label{eq_vect_prop}
   \vect(\Theta \hat Z_{\symt\symk} - \underline Y_\symt) = \hat{\bb Z}_{\symt\symk} \theta - \vect(\underline Y_\symt),
\end{align}
and, therefore, \eqref{eq_regression_batch_dro_param_final} can be written as
\begin{align}\label{eq_vect_opt_stack}
    \argmin_\theta \lt\|\lt[\matb \hat{\bb Z}_{\symt\symk} \\ \Psi \mate \rt] \theta - \lt[\matb\vect(\underline Y_\symt) \\ \mu \mate\rt]\rt\|_1.
\end{align}

Similar to \eqref{eq_regression_batch_dro_state_ir}, problem \eqref{eq_vect_opt_stack} can be solved for each iteration $\symk$ using iteratively reweighted inner iterations $\syml = 1,\dots,L_\Theta$ to approximate the $\ell_1$ norm \cite{irls_convergence}, i.e.

\begin{align}\label{eq_regression_batch_dro_param_ir}
    \hat \theta_{\symt,\symk+1, \syml+1} &= \argmin_\theta \lt\|\lt[\matb \hat{\bb Z}_{\symt\symk} \\ \Psi \mate \rt] \theta - \lt[\matb\vect(\underline Y_\symt) \\ \mu \mate\rt]\rt\|_{V_{\symt,\symk+1,\syml}}^2\!\!, \!\!\!\!\!\!\!\!
    \\ \nonumber \!\!\! 
    V_{\symt,\symk+1,\syml}^{-1} \!&=\! \sqrt{\mt{diag} \!\lt(\lt[\matb
    \hat{\bb Z}_{\symt\symk} \hat \theta_{\symt,\symk+1,\syml} \!-\! \vect(\underline Y_\symt) \\
    \Psi \hat \theta_{\symt,\symk+1,\syml} \!-\! \mu
    \mate\rt]\rt)^{\!\!2} + \alpha I_{nm+M}},
\end{align}

The quantity $\hat \theta_{\symt\symk, L_{\Theta}}$ obtained in the last iteration is an approximate solution to \eqref{eq_regression_batch_dro_param_final}.

\subsection{The overall algorithm}\label{subsec_algo1}

The estimation procedure alternates between \eqref{eq_regression_batch_dro_state_final} and \eqref{eq_regression_batch_dro_param_final}, solved iteratively using \eqref{eq_regression_batch_dro_state_ir} and \eqref{eq_regression_batch_dro_param_ir}, respectively. The full implementation of both outer and inner loops at each time instant is provided in \cref{algo_airls}.
\begin{algorithm}[H]
\caption{}\label{algo_airls}
\begin{algorithmic}
\State $\underline C_0 = I_m$
\For{$\symt = 1,\dots,N$}
\State $\underline C_{\symt} \gets \beta \underline C_{\symt-1} + [\tilde x_{\symt}^\top, \tilde x_{\symt-1}^\top, \tilde u_{\symt-1}^\top]^\top [\tilde x_{\symt}^\top, \tilde x_{\symt-1}^\top, \tilde u_{\symt-1}^\top]$
\State $\hat \Theta_{\symt,\symk = 0} \gets \hat \Theta_{\symt-1,K}$
\State $\hat Z_{\symt,\symk = 0} \gets \hat Z_{\symt-1, K}$
\For{$\symk = 1,\dots,K$}
\State $\hat Z_{\symt\symk,\syml = 0} \gets \hat Z_{\symt,\symk-1}$
\State $\hat \Theta_{\symt\symk,\syml = 0} \gets \hat \Theta_{\symt,\symk-1}$
\For{$\syml = 1,\dots,L_Z$}
\State {\textbf{minimize} \eqref{eq_regression_batch_dro_state_ir} to obtain $\hat Z_{\symt\symk\syml}$}
\EndFor
\State $\hat Z_{\symt\symk} \gets \hat Z_{\symt\symk, L_Z}$
\For{$\syml = 1,\dots,L_\Theta$}
\State {\textbf{minimize} \eqref{eq_regression_batch_dro_param_ir} to obtain $\hat \Theta_{\symt\symk\syml }$}
\EndFor
\State $\hat \Theta_{\symt\symk} \gets \hat \Theta_{\symt\symk, L_\Theta}$
\EndFor
\State $\underline C_\symt \gets E_y^\top E_y \underline C_\symt + E_z^\top \hat Z_{\symt,K}$
\EndFor
\end{algorithmic}
\end{algorithm}

\cref{algo_airls} provides a robust solution for the joint state/input and parameter estimation problem using a fixed-size matrices, which suits online application. However, the nested loops are often too slow for real-time application. We will therefore not study the convergence of \cref{algo_airls}. Instead, we will study the convergence of a more computationally efficient version, presented in the next section.

\section{Alternating and Iteratively-reweighted Recursive Least Squares (AIRLS)}\label{section_conv}

In this section, we will first show how to easily compute the optimizers of \eqref{eq_regression_batch_dro_state_ir} and \eqref{eq_regression_batch_dro_param_ir}, and then prove that \cref{algo_airls} converges when $K = L_Z = L_\Theta = 1$.
Problems \eqref{eq_regression_batch_dro_state_ir} and \eqref{eq_regression_batch_dro_param_ir}, admit a closed-form solution, as discussed in the following.
\begin{definition}
For any pair of matrices $X$ and $W$ such that $WX$ exists and has full column rank, the weighted pseudo-inverse is $X^\dagger_W = (X^\top W X)^{-1} X^\top W$.
\end{definition}
\noindent
Note that, by construction $X X^\dagger_W X = X$.

The problem \eqref{eq_regression_batch_dro_param_ir} is quadratic and solved by
\begin{align}\label{eq_regression_batch_dro_param_ir_sol}
    \hat \theta_{\symt,\symk+1, \syml+1} \!&=\! \lt[\matb \hat{\bb Z}_{\symt\symk}\! \\ \Psi \mate \rt]^{\dagger}_{V_{\symt,\symk+1,\syml}} \lt[\matb\vect(\underline Y_\symt) \\ \mu \mate\rt]\!.
\end{align}
Computing \eqref{eq_regression_batch_dro_param_ir_sol} amounts to solve a linear system with as many equations as the number of parameters in $\Theta$. %It can challenging to compute for large systems but it can still be performed in real-time for thousands of parameters.
The problem \eqref{eq_regression_batch_dro_state_ir} is composed of $N$ multivariate optimization problems, which can all be solved by an oblique projection of the $\symi^{th}$ column of $\underline C_{\symt}$ on the null space of $[-I_n, \hat \Theta_{\symt\symk}]$, weighed by $W_{\symi\symt\symk\syml}$.

\begin{theorem}\label{thm_proj}
The problem \eqref{eq_regression_batch_dro_state_ir} is solved by
\begin{align}\label{eq_rrairls_sol_z}
    \hat Z_{\symi,\symt\symk, \syml+1} &= E_z P_{\symi\symt\symk\syml}\underline C_{\symi, \symt},
\end{align}
where
\begin{align}\label{eq_proof_thm1_proj_def}
    P_{\symi\symt\symk\syml} \!=\! I_m - (([-I_n, \hat \Theta_{\symt\symk}]^\top)^\dagger_{W_{\symi\symt\symk\syml}})^\top [-I_n, \hat \Theta_{\symt\symk}].
\end{align}
\end{theorem}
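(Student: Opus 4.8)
\emph{Proof sketch.} The plan is to recognize \eqref{eq_regression_batch_dro_state_ir} as a weighted projection onto a subspace and then read off the needed coordinates. As a preliminary remark, $[E_y^\top, E_z^\top]^\top = I_m$, so the stacked vector on the right-hand side of \eqref{eq_regression_batch_dro_state_ir} is the $\symi^{th}$ column $\underline C_{\symi,\symt}$ of $\underline C_\symt$; writing $M := [\hat\Theta_{\symt\symk}^\top,\, I_{n+n_u}^\top]^\top$, problem \eqref{eq_regression_batch_dro_state_ir} is $\hat Z_{\symi,\symt\symk,\syml+1} = \argmin_{Z_\symi} \|M Z_\symi - \underline C_{\symi,\symt}\|^2_{W_{\symi\symt\symk\syml}}$. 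Since $M$ has full column rank and $W_{\symi\symt\symk\syml}$ is positive definite and diagonal (it is the inverse of the positive diagonal matrix defined in \eqref{eq_regression_batch_dro_state_ir}), this is a strictly convex quadratic with a unique minimizer, so the $\argmin$ is single-valued.

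The argument hinges on two elementary identities: $[-I_n,\hat\Theta_{\symt\symk}]\,M = -\hat\Theta_{\symt\symk} + \hat\Theta_{\symt\symk} = \bb{0}$ and $E_z M = I_{n+n_u}$. Combined with the rank count $\operatorname{rank} M = n+n_u = m - n = \dim\ker[-I_n,\hat\Theta_{\symt\symk}]$, the first identity shows that the columns of $M$ form a basis of $\ker[-I_n,\hat\Theta_{\symt\symk}]$. Hence the change of variable $v = M Z_\symi$ turns \eqref{eq_regression_batch_dro_state_ir} into the minimization of $\|v - \underline C_{\symi,\symt}\|^2_{W_{\symi\symt\symk\syml}}$ over $v \in \ker[-I_n,\hat\Theta_{\symt\symk}]$, and, since $E_z M = I_{n+n_u}$, its minimizer $v^\star$ gives back $\hat Z_{\symi,\symt\symk,\syml+1} = E_z v^\star$. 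In words, \eqref{eq_regression_batch_dro_state_ir} is the $W_{\symi\symt\symk\syml}$-weighted oblique projection of $\underline C_{\symi,\symt}$ onto $\ker[-I_n,\hat\Theta_{\symt\symk}]$, followed by extraction of the lower (that is, $\underline Z$) block.

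It then remains to write $v^\star$ in closed form and identify it with $P_{\symi\symt\symk\syml}\,\underline C_{\symi,\symt}$. I would obtain $v^\star$ from the stationarity and feasibility conditions of the equality-constrained quadratic --- a one-line Lagrange-multiplier computation --- and, after eliminating the multiplier, recognize the resulting expression as $I_m$ minus the transpose of a suitably weighted pseudo-inverse of $[-I_n,\hat\Theta_{\symt\symk}]^\top$ times $[-I_n,\hat\Theta_{\symt\symk}]$, i.e. exactly the operator $P_{\symi\symt\symk\syml}$ of \eqref{eq_proof_thm1_proj_def}; the inner inverse there exists because $[-I_n,\hat\Theta_{\symt\symk}]^\top$ has full column rank and the weight is positive definite. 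Together with $\hat Z_{\symi,\symt\symk,\syml+1} = E_z v^\star$ this yields \eqref{eq_rrairls_sol_z}. (An equivalent route is to start from the unconstrained normal-equations solution $M^{\dagger}_{W_{\symi\symt\symk\syml}}\,\underline C_{\symi,\symt}$ and transform it with the matrix inversion lemma, but the constrained-projection argument is shorter.)

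I do not expect a deep obstacle; the content is bookkeeping. The one point that requires care is the placement of the weight in the pseudo-inverse, which must be carried consistently from the $W_{\symi\symt\symk\syml}$-weighted objective through the elimination of the multiplier. Everything else --- the block products involving $E_y$, $E_z$, $[-I_n,\hat\Theta_{\symt\symk}]$ and $M$, the rank count, and the invertibility check --- is routine linear algebra.
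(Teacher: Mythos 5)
Your proposal is correct and follows essentially the same route as the paper: both identify \eqref{eq_regression_batch_dro_state_ir} as the $W_{\symi\symt\symk\syml}$-weighted projection of $\underline C_{\symi,\symt}$ onto $\mathrm{null}([-I_n,\hat\Theta_{\symt\symk}])$, using the columns of $[\hat\Theta_{\symt\symk}^\top, I_{n+n_u}]^\top$ as a basis of that null space, and then extract the $Z$-block; the paper simply runs the argument in the reverse direction (it quotes the weighted-projector formula and substitutes the constraint to recover \eqref{eq_regression_batch_dro_state_ir}), whereas you derive the projector forward via the change of variables and the KKT conditions. One caution on the very point you flagged: eliminating the multiplier in $\min\{(v-c)^\top W (v-c) : Av=\bb 0\}$ gives $v^\star = \bigl(I_m - W^{-1}A^\top(AW^{-1}A^\top)^{-1}A\bigr)c$, so under the convention $\|x\|_W^2 = x^\top W x$ the weight entering the pseudo-inverse in \eqref{eq_proof_thm1_proj_def} is really $W_{\symi\symt\symk\syml}^{-1}$ rather than $W_{\symi\symt\symk\syml}$; the paper's proof glosses over this by citing the projector formula without tracking the weight, so carrying out your Lagrange computation would surface (and let you correct) that detail.
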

\begin{proof}
For a basis $\bb B$ such that $\mt{range}(\bb B) = \mt{null}([-I_n, \hat \Theta_{\symt\symk}])$, $P_{\symi\symt\symk\syml}$ provides the weighted least squares solution \cite{meyer2000matrix}
\begin{align}\label{eq_proof_thm1_opt_proj}
    P_{\symi\symt\symk\syml} [\underline C_\symt]_\symi = \argmin_{d}\; & \lt\|d - [\underline C_\symt]_\symi \rt\|_{W_{\symi\symt\symk\syml}}^2, \\  \nonumber
    &\quad \mt{s.t. } d \in \mt{range}(\bb B).
\end{align}
Choosing $d = [Y^\top, Z^\top]^\top$, the problem \eqref{eq_proof_thm1_opt_proj} becomes
\begin{align}\label{eq_proof_thm1_opt}
    \!P_{\symi\symt\symk\syml} [\underline Y_{\symi, \symt}^\top, \underline Z_{\symi, \symt}^\top]^{\!\top} \!\!&=\! \argmin_{Y_\symi, Z_\symi} \!\lt\|\![Y_\symi^\top, Z_\symi^\top]^{\!\top} \!\! -\! [\underline Y_{\symi, \symt}^\top, \underline Z_{\symi, \symt}^\top]^{\!\top}\! \rt\|_{W_{\symi\symt\symk\syml}}^2\!\!\!,\!\! \nonumber \\
    &\quad\; \mt{s.t. } [-I_n, \hat \Theta_{\symt\symk}][Y_\symi^\top\!\!, Z_\symi^\top]^{\!\top} \!\!= \bb 0_n.
\end{align}
Plugging the constraint to replace $Y_{\symi}$ in \eqref{eq_proof_thm1_opt} yields exactly \eqref{eq_regression_batch_dro_state_ir} (with $\symt+1$ instead of $\symt$).
\end{proof}
\begin{corollary}\label{cor_proj}
The estimate of the state and input at a particular time step $t$ is given by
\begin{align}
    [\hat x_{\symt+1, \symk, \syml+1}^\top, \hat x_{\symt\symk, \syml+1}^\top, \hat u_{\symt\symk, \syml+1}^\top]^\top = P_{x,\symt\symk\syml} [\tilde x_{\symt+1}^\top, \tilde x_{\symt}^\top, \tilde u_{\symt}^\top]^\top,
\end{align}
where $P_{x,\symt\symk\syml}$ is defined by \eqref{eq_proof_thm1_proj_def} with
\begin{align}
    W_{x,\symt\symk\syml}^{-1} \!&=\! \sqrt{\mt{diag} \!\lt(\lt[\matb
     \hat \Theta_{\symt\symk} [\hat x_{\symt\symk\syml}^\top, \hat u_{\symt\symk\syml}^\top]^\top \!\!- \tilde x_{\symt+1} \\
     [\hat x_{\symt\symk\syml}^\top, \hat u_{\symt\symk\syml}^\top]^\top \!\!-  [\tilde x_{\symt}^\top, \tilde u_{\symt}^\top]^\top
    \mate\rt]\rt)^{\!\!2} \!\!+\! \alpha I_{m}},
\end{align}
\end{corollary}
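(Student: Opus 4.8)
The plan is to obtain \cref{cor_proj} as a direct specialization of \cref{thm_proj}: instead of projecting the columns of the aggregated correlation matrix $\underline C_\symt$, I would project the single raw measurement vector $[\tilde x_{\symt+1}^\top, \tilde x_\symt^\top, \tilde u_\symt^\top]^\top$ that carries the information about time step $\symt$. Concretely, the state/input estimate at inner iteration $\syml$ of outer iteration $\symk$ is the minimizer over $[x_\symt^\top, u_\symt^\top]^\top$ of $\|[\hat \Theta_{\symt\symk}^\top, I_{n+n_u}]^\top [x_\symt^\top, u_\symt^\top]^\top - [\tilde x_{\symt+1}^\top, \tilde x_\symt^\top, \tilde u_\symt^\top]^\top\|_{W_{x,\symt\symk\syml}}^2$, with $\hat x_{\symt+1}$ then set to $\hat \Theta_{\symt\symk}[\hat x_\symt^\top, \hat u_\symt^\top]^\top$. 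This is exactly problem \eqref{eq_regression_batch_dro_state_ir} with $\underline C_{\symi,\symt}$ replaced by the single sample $[\tilde x_{\symt+1}^\top, \tilde x_\symt^\top, \tilde u_\symt^\top]^\top$; in particular the reweighting matrix $W_{x,\symt\symk\syml}$ stated in the corollary is just the specialization of the rule in \eqref{eq_regression_batch_dro_state_ir}, with the residual blocks evaluated at the previous iterate $[\hat x_{\symt\symk\syml}^\top, \hat u_{\symt\symk\syml}^\top]^\top$.

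With this problem in hand I would repeat the three steps of the proof of \cref{thm_proj}. First, introduce the stacked variable $d = [x_{\symt+1}^\top, x_\symt^\top, u_\symt^\top]^\top$; eliminating $x_{\symt+1}$ through $x_{\symt+1} = \hat \Theta_{\symt\symk}[x_\symt^\top, u_\symt^\top]^\top$ is equivalent to constraining $d$ to lie in $\mt{null}([-I_n, \hat \Theta_{\symt\symk}])$, and in the variable $d$ the objective becomes $\|d - [\tilde x_{\symt+1}^\top, \tilde x_\symt^\top, \tilde u_\symt^\top]^\top\|_{W_{x,\symt\symk\syml}}^2$, which is precisely \eqref{eq_proof_thm1_opt_proj} with the single measurement vector as the target. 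Second, by the weighted least-squares projection identity from \cite{meyer2000matrix} used in \cref{thm_proj}, the minimizer is $d^\star = P_{x,\symt\symk\syml}[\tilde x_{\symt+1}^\top, \tilde x_\symt^\top, \tilde u_\symt^\top]^\top$ with $P_{x,\symt\symk\syml}$ as in \eqref{eq_proof_thm1_proj_def}. Third, reading off the three blocks of $d^\star$ according to the partition into $(x_{\symt+1}, x_\symt, u_\symt)$ gives $[\hat x_{\symt+1,\symk,\syml+1}^\top, \hat x_{\symt\symk,\syml+1}^\top, \hat u_{\symt\symk,\syml+1}^\top]^\top = P_{x,\symt\symk\syml}[\tilde x_{\symt+1}^\top, \tilde x_\symt^\top, \tilde u_\symt^\top]^\top$, which is the claim.

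I do not expect a substantive obstacle, since the corollary is essentially \cref{thm_proj} applied to a one-column ``data matrix''. The only points needing care are bookkeeping: checking that eliminating $x_{\symt+1}$ reproduces exactly the null-space constraint of \eqref{eq_proof_thm1_opt} so that \cref{thm_proj} transfers verbatim, and verifying that the weight $W_{x,\symt\symk\syml}$ in the statement coincides with $W_{\symi\symt\symk\syml}$ of \eqref{eq_regression_batch_dro_state_ir} once $\underline Y_{\symi,\symt}$ and $\underline Z_{\symi,\symt}$ are replaced by $\tilde x_{\symt+1}$ and $[\tilde x_\symt^\top, \tilde u_\symt^\top]^\top$, and $\hat Z_{\symi,\symt\symk\syml}$ by $[\hat x_{\symt\symk\syml}^\top, \hat u_{\symt\symk\syml}^\top]^\top$.
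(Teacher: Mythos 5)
Your proposal is correct and follows exactly the paper's route: the paper proves the corollary precisely by substituting $[\tilde x_{\symt+1}^\top, \tilde x_{\symt}^\top, \tilde u_{\symt}^\top]^\top$ for $[\underline C_{\symt}]_\symi$ and $[\hat x_{\symt\symk\syml}^\top, \hat u_{\symt\symk\syml}^\top]^\top$ for $\hat Z_{\symi,\symt\symk\syml}$ in \eqref{eq_regression_batch_dro_state_ir} and in the proof of \cref{thm_proj}. You merely spell out the resulting re-run of the three steps in more detail than the paper does.
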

\begin{proof}
The proof is given by replacing $[\underline C_{\symt}]_\symi$ by $[ \tilde x_{\symt+1}^\top,  \tilde x_{\symt}^\top,  \tilde u_{\symt}^\top]^\top$ and $\hat Z_{\symi, \symt\symk\syml}$ by $[\hat x_{\symt\symk\syml}^\top,  \hat u_{\symt\symk\syml}^\top]^\top$ in \eqref{eq_regression_batch_dro_state_ir} and in the proof of \cref{thm_proj}.
\end{proof}

Computing the projector $P_{\symi\symt\symk\syml}$ may be expensive due to the pseudo-inverse \eqref{eq_proof_thm1_proj_def}. However, \eqref{eq_rrairls_sol_z} only requires the projection of the correlation matrix $\underline C_\symt$ (i.e. $n_u + 2n$ vectors), which is much faster to compute. 

% In order to benefit from previous estimates of $C_\symt$ and not restart from zero at each time-step, one can use
% \begin{align}\label{eq_rrairls_sol_c}
%     &\hat C_{\symi,\symt\symk, \syml+1}
%     \\ \nonumber
%     &\;= P_{\symi\symt\symk\syml}[\beta \hat C_{\symi,\symt-1,KL_\Theta} + [\tilde x_{\symt+1}^\top, \tilde x_{\symt}^\top, \tilde u_{\symt}^\top]^\top [\tilde x_{\symt+1}^\top, \tilde x_{\symt}^\top, \tilde u_{\symt}^\top]]_\symi,
% \end{align}
% and update $\hat Z$ with $\hat Z_{\symi,\symt\symk, \syml+1} = E_z \hat C_{\symi,\symt\symk, \syml+1}$. This is an approximation of \eqref{eq_rrairls_sol_z}, but numerical evidence shows that it leads to more accurate results.

\subsection{AIRLS estimator}

The AIRLS algorithm is defined as \cref{algo_airls} with $K = L_Z = L_\Theta = 1$ and where \eqref{eq_rrairls_sol_z} and \eqref{eq_regression_batch_dro_param_ir_sol} are used for computing the optimizers of \eqref{eq_regression_batch_dro_state_ir} and \eqref{eq_regression_batch_dro_param_ir}, respectively. It has the computational advantage of replacing the nested loops in \cref{algo_airls} with one-step updates. In a sense, the robustness provided by the $\ell_1$ cost and the regularization term in \eqref{eq_regression_batch_reg} help compensate for the unfinished loops. In the sequel, for simplicity, we will drop the subscripts $\symk = 1$ and $\syml = 1$.% in $\hat \Theta_{\symt\symk\syml}$, $\hat Z_{\symt\symk\syml}$, $\underline C_{\symt\symk\syml}$, and $\hat P_{\symi\symt\symk\syml}$.

%In Algorithm \ref{algo_airls}, for the variables $\hat \Theta$ and $\hat Z$, the first subscript refers to the time step, the second subscript refers to the alternating iteration $k$, and the third subscript, if present, refers to the iteratively reweighted sub-iteration. Algorithm \ref{algo_airls} uses only one alternating iteration at each time step, and only one iteratively reweighted sub-iteration for each alternating iteration.

% \subsection{convergence}
\begin{definition}
Let $\underline{\bar \Gamma}_{\symt}$ represent an average measurement of the system such that the corresponding asymptotic correlation matrix $\sum_{\symi = 0}^\infty \beta^i \underline{\bar \Gamma}_{\symi}$ is equal to $\underline{C}_{\symt}$.
\end{definition}

Similarly to the inverse power method \cite{rhode2014_rgtls}, AIRLS needs $\underline{C}_{\symt}$ to be full rank to ensure convergence. Numerical experiments in \cref{section_results} show that it also converges when $\underline{C}_{\symt}$ has rank $n+n_u$, i.e. in the noiseless case.

\begin{theorem}[convergence]\label{thm_conv}
With bounded measurements $\tilde \Gamma_\symt \preceq \gamma_{\mt{max}} I_m$ for all $\symt$, and with $\underline{\bar \Gamma}_{\symt} \succeq \gamma_{\mt{min}} I_m$, a forgetting factor satisfying $1 - \beta \leq \gamma_{\mt{max}}^{-2} \gamma_{\mt{min}}^{2}$ guarantees that the AIRLS update converges and can only decrease unweighted residuals of \eqref{eq_regression_batch_dro_param_ir}, i.e.
\begin{align}\label{eq_proof_res_def}
    \!\!\!R(\hat \Theta_\symt, \underline C_\symt) \!=\! \lt[\matb \hat{\bb Z}_\symt \\ \Psi\mate\rt] \!\! \vect(\hat \Theta_\symt) \!-\! \lt[\matb \underline Y_\symt \\ \mu \mate \rt] \!=\! \lt[\matb \vect([\unaryminus I_n, \hat \Theta_\symt] \underline C_\symt) \\ \Psi \vect(\hat \Theta_\symt) - \mu \mate \rt]\!\!.\!
\end{align}

\end{theorem}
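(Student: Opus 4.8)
\noindent\textit{Proof plan.}\
The plan is to read the AIRLS update as the composition of two iteratively-reweighted least squares (IRLS) half-steps --- first on $\hat Z_\symt$ through \eqref{eq_regression_batch_dro_state_ir}, then on $\hat\Theta_\symt$ through \eqref{eq_regression_batch_dro_param_ir} --- and to run the classical majorization--minimization (MM) argument for IRLS on a single smoothed-$\ell_1$ surrogate, using the forgetting-factor bound only to keep the recursively built $\underline C_\symt$ uniformly well conditioned and to absorb the perturbation incurred by refreshing $\underline C_\symt$ with the new measurement. For a vector or matrix $M$ with entries $M_\syml$, write $\phi(M) = \sum_\syml \sqrt{M_\syml^2 + \alpha}$, the smooth surrogate of $\|M\|_1$ (resp.\ $\|M\|_{F1}$) induced by the stabilizing constant $\alpha$ in \eqref{eq_regression_batch_dro_state_ir}--\eqref{eq_regression_batch_dro_param_ir}, which satisfies $\|M\|_1 \le \phi(M) \le \|M\|_1 + \sqrt\alpha\,\dim(M)$. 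Dropping the fixed indices $\symk=\syml=1$, introduce the joint surrogate
\begin{align*}
 \Phi_\symt(\Theta,Z) = \phi(\Theta Z - \underline Y_\symt) + \phi(Z - \underline Z_\symt) + \phi(\Psi\vect(\Theta) - \mu),
\end{align*}
where $\underline Z_\symt = E_z\underline C_\symt$ is the anchor entering \eqref{eq_regression_batch_dro_state_ir}, read just after the refresh \eqref{eq_autocorr_rec} and before the overwrite in the last line of \cref{algo_airls}.

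First I would bound $\underline C_\symt$. From $\underline C_\symt = \sum_{\symi=0}^{\infty}\beta^\symi\underline{\bar\Gamma}_\symi \succeq \gamma_{\mt{min}}(1-\beta)^{-1}I_m$ and, via \eqref{eq_autocorr_rec} with $\tilde\Gamma_\symt \preceq \gamma_{\mt{max}}I_m$ together with a companion estimate on the recursively reinjected $Z$-block, $\underline C_\symt \preceq \gamma_{\mt{max}}(1-\beta)^{-1}I_m$, hence $\gamma_{\mt{min}}(1-\beta)^{-1}I_m \preceq \underline C_\symt \preceq \gamma_{\mt{max}}(1-\beta)^{-1}I_m$ for all $\symt$. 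Three consequences matter: (i) the reweighting matrices obey $\alpha I \preceq W_{\symi\symt}^{-1},\,V_\symt^{-1} \preceq cI$ with $c$ independent of $\symt$ (their diagonals are squared residuals, bounded through the bounds on $\underline C_\symt$ and $\hat\Theta_\symt$), so all weighted pseudo-inverses and the projector of \cref{thm_proj} exist with uniformly bounded norm and each quadratic majorant below is $\alpha$-strongly convex; (ii) the iterates $(\hat\Theta_\symt,\hat Z_\symt)$ remain in a fixed compact set; (iii) the refresh moves $\underline C_\symt$ by at most $\gamma_{\mt{max}}$ in norm, whereas $\lambda_{\mt{min}}(\underline C_\symt)\ge\gamma_{\mt{min}}(1-\beta)^{-1}\ge\gamma_{\mt{max}}^2/\gamma_{\mt{min}}$ exactly because $1-\beta\le\gamma_{\mt{max}}^{-2}\gamma_{\mt{min}}^2$, so the refresh is a relative perturbation of $\underline C_\symt$ of size at most $\gamma_{\mt{min}}/\gamma_{\mt{max}}\le 1$ --- the slack needed in the last step.

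Second I would run the MM descent. Concavity of $\sqrt{\cdot}$, i.e.\ $\sqrt a \le \sqrt b + (a-b)/(2\sqrt b)$ taken at $a=M_\syml^2+\alpha$, $b=(M_\syml^0)^2+\alpha$, shows $M\mapsto\phi(M)$ is majorized at $M^0$, up to an additive constant, by $\tfrac12\sum_\syml M_\syml^2/\sqrt{(M_\syml^0)^2+\alpha}$, whose minimization reproduces one weighted least-squares step with exactly the weights of \eqref{eq_regression_batch_dro_state_ir}--\eqref{eq_regression_batch_dro_param_ir}. Applied blockwise --- via \cref{lem_state_upd_sep} for $Z$ and \eqref{eq_vect_prop} for $\Theta$ --- this identifies the $\hat Z_\symt$-step (equivalently \eqref{eq_rrairls_sol_z}) as one MM step of $Z\mapsto\Phi_\symt(\hat\Theta_{\symt-1},Z)$ at $\hat Z_{\symt-1}$, and the $\hat\Theta_\symt$-step (equivalently \eqref{eq_regression_batch_dro_param_ir_sol}) as one MM step of $\Theta\mapsto\Phi_\symt(\Theta,\hat Z_\symt)$ at $\hat\Theta_{\symt-1}$ (the term $\phi(Z-\underline Z_\symt)$ is $\Theta$-free), so that
\begin{align*}
 \Phi_\symt(\hat\Theta_\symt,\hat Z_\symt) \le \Phi_\symt(\hat\Theta_{\symt-1},\hat Z_\symt) \le \Phi_\symt(\hat\Theta_{\symt-1},\hat Z_{\symt-1}),
\end{align*}
strictly unless $(\hat\Theta_{\symt-1},\hat Z_{\symt-1})$ is already a fixed point, by the $\alpha$-strong convexity of the majorants. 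Since the overwrite in \cref{algo_airls} makes $R(\hat\Theta_\symt,\underline C_\symt)=[\vect(\hat\Theta_\symt\hat Z_\symt-\underline Y_\symt)^\top,\,(\Psi\vect(\hat\Theta_\symt)-\mu)^\top]^\top$, one gets $\phi(R(\hat\Theta_\symt,\underline C_\symt)) = \Phi_\symt(\hat\Theta_\symt,\hat Z_\symt) - \phi(\hat Z_\symt-\underline Z_\symt)$; combining this with the chain above and with (iii), used to bound the drift of $\Phi_\symt$ and of the anchoring term $\phi(\hat Z_\symt-\underline Z_\symt)$ between successive updates, shows $\phi(R)$ cannot increase, and $\phi(R)\ge\|R\|_1$ then gives the unweighted-residual claim. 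Finally $\Phi_\symt(\hat\Theta_\symt,\hat Z_\symt)$ is non-increasing and bounded below by $0$, so it converges, its decrements tend to $0$, and with the uniform $\alpha$-strong convexity and the compactness from (i)--(ii) the standard MM argument forces $(\hat\Theta_\symt,\hat Z_\symt)$, hence $\hat\Theta_\symt$, to converge to a fixed point of the AIRLS map.

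The step I expect to be the main obstacle is twofold. First, the estimates in (i)--(iii) are coupled: the bound on $\underline C_\symt$ controls the weights and the projector of \cref{thm_proj}, which in turn control the recursively reinjected block $\hat Z_\symt$ and hence $\underline C_{\symt+1}$, so these bounds must be proved by a simultaneous induction rather than separately. Second, unlike textbook IRLS, $\Phi_\symt$ is a moving target, since $\underline Y_\symt$ is refreshed by $\tilde\Gamma_\symt$ and the anchor $\underline Z_\symt$ is overwritten by the previous estimate at every update; the heart of the proof is to show that the drift of $\Phi_\symt$ and of $\phi(\hat Z_\symt-\underline Z_\symt)$ across one update is dominated by the per-step MM decrease, and it is exactly here that $1-\beta\le\gamma_{\mt{max}}^{-2}\gamma_{\mt{min}}^2$ is indispensable, as it makes the refresh a uniformly small relative perturbation of $\underline C_\symt$ (consequence (iii)). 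Carrying out this perturbation bookkeeping tightly, with the non-symmetric oblique projector of \cref{thm_proj} sitting inside it, is the delicate part of the argument.
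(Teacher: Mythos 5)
Your route is genuinely different from the paper's, and the difference matters. The paper's convergence proof never touches the $\ell_1$ objective or an IRLS/majorization argument: it tracks the squared $\ell_2$ norm of the unweighted residual $R(\hat\Theta_\symt,\underline C_\symt)$ directly, observing that each closed-form update \eqref{eq_regression_batch_dro_param_ir_sol}, \eqref{eq_rrairls_sol_z} maps the current residual through an operator of the form $I-XX^\dagger_W$ (an oblique projection), so each half-step does not increase the residual \emph{whatever the weights are}; the weights, and hence the $\ell_1$ interpretation, play no role in the monotonicity argument. The within-time-step part of your plan (each half-step is one MM step for the smoothed surrogate $\Phi_\symt$, hence $\Phi_\symt(\hat\Theta_\symt,\hat Z_\symt)\le\Phi_\symt(\hat\Theta_{\symt-1},\hat Z_\symt)\le\Phi_\symt(\hat\Theta_{\symt-1},\hat Z_{\symt-1})$ for fixed data) is standard and correct. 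The gaps are exactly in the two places you defer.

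First, the ``moving target'' step is not merely delicate; as formulated it cannot close. You propose to show that the drift of $\Phi_\symt$ caused by injecting $\tilde\Gamma_{\symt+1}$ is ``dominated by the per-step MM decrease.'' But the MM decrease vanishes whenever the iterate already minimizes its majorant, while the data-induced drift is generically nonzero, so no such domination holds uniformly. What must be shown instead --- and what the paper does in \eqref{eq_proof_res_proj_c}--\eqref{eq_proof_beta_bound} --- is that the refresh $\underline C_\symt\mapsto\beta\underline C_\symt+\tilde\Gamma_{\symt+1}$ is \emph{itself} residual-non-increasing: the discount $\beta$ shrinks the old residual enough to absorb the new term, which is precisely inequality \eqref{eq_proof_res_ineq_3} and is where $1-\beta\le\gamma_{\mt{max}}^{-2}\gamma_{\mt{min}}^{2}$ enters. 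Your consequence (iii), a relative-perturbation bound on $\underline C_\symt$, only yields boundedness, not monotonicity. Worse, the absorption argument relies on homogeneity and the triangle inequality of the Frobenius norm; your surrogate $\phi$ satisfies neither exactly (only up to additive $O(\sqrt\alpha)$ slack per coordinate), so replaying the paper's step for $\phi$ leaves per-step errors that accumulate over infinitely many time steps.

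Second, even granting monotonicity of $\phi(R)$, the conclusion does not follow: $\phi(R)\ge\|R\|_1$ gives a decreasing \emph{upper bound} on $\|R\|_1$, which does not make the unweighted residual itself decrease --- and the theorem's claim (as the paper proves it) is about $\|R(\hat\Theta_\symt,\underline C_\symt)\|_2^2$. Similarly, convergence of the monotone bounded sequence $\Phi_\symt(\hat\Theta_\symt,\hat Z_\symt)$ does not force convergence of the iterates $(\hat\Theta_\symt,\hat Z_\symt)$ without extra structure; the paper claims, and needs, only convergence of the residual norm. The most direct repair is to drop the surrogate altogether and track $\|R\|_2^2$: the reweighted updates are projections, hence non-expansive for $R$ within a time step, and the $\beta$-condition handles the refresh between time steps.
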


\begin{proof}
Using \eqref{eq_vect_prop} and \eqref{eq_regression_batch_dro_param_ir_sol} to write $R(\hat \Theta_{\symt+1}, \underline C_\symt)$ depending on $\underline C_\symt$ and $\hat \Theta_{\symt}$ yields
\begin{align}\label{eq_proof_regression_res_param}
    &\lt[\matb \hat{\bb Z}_\symt \\ \Psi\mate\rt] \!\! \vect(\hat \Theta_{\symt+1}) \!-\! \lt[\matb \underline Y_\symt \\ \mu \mate \rt] = 
    \\
    \nonumber & \quad\quad\quad
    \!\lt(\!I_{n(n+n_u)+M} \!-\!  \lt[\matb \hat{\bb Z}_{\symt}\! \\ \Psi \mate \rt] \!\! \lt[\matb \hat{\bb Z}_{\symt}\! \\ \Psi \mate \rt]^{\!\dagger}_{\!V_\symt} \!\rt)\!\!  \lt[\matb \hat{\bb Z}_\symt \\ \Psi\mate\rt] \!\! \vect(\hat \Theta_{\symt}) \!-\! \lt[\matb \underline Y_\symt \\ \mu \mate \rt], \nonumber
\end{align}
By definition, $X X^\dagger_W$ is an oblique projection for any $X$ and $W$, which means that $\|R(\hat \Theta_{\symt+1}, \underline C_\symt)\|_2^2 \leq \|R(\hat \Theta_\symt, \underline C_\symt)\|_2^2$.

We now compare $\|R(\hat \Theta_\symt, \underline C_\symt)\|_2^2$ to $\|R(\hat \Theta_\symt, \underline C_{\symt+1})\|_2^2$. \newline First, note that the second block in \eqref{eq_proof_res_def} does not contain $\underline C_\symt$. This means that $\|R(\hat \Theta_\symt, \underline C_\symt)\|_2^2 - \|R(\hat \Theta_\symt, \underline C_{\symt+1})\|_2^2$ is equal to $\|\vect([-I_n, \hat \Theta_\symt] \underline C_\symt)\|_2^2 - \|\vect([-I_n, \hat \Theta_\symt] \underline C_{\symt+1})\|_2^2$.\newline Second, we write the following decomposition:
\begin{align}\label{eq_proof_res_split}
    \|\vect([-I_n, \hat \Theta_\symt] \underline C_{\symt+1})\|_2^2 = \sum_{\symi = 1}^m \|[-I_n, \hat \Theta_\symt] \underline C_{\symi,\symt+1}\|_2^2.
\end{align}
Third, we write the closed form solution of \eqref{eq_regression_batch_dro_state_ir} (which is equal to \eqref{eq_rrairls_sol_z}) as
\begin{align}\label{eq_regression_batch_dro_param_ir_sol_pseudoinv}
    \hat Z_{\symi,\symt+1} \!&=\! \lt[\matb \hat \Theta_\symt \\ I_{n+n_u} \mate \rt]^{\dagger}_{W_{\symi,\symt}} \!\! (\beta \underline C_{\symi,\symt} + \tilde \Gamma_{\symi,\symt+1}) .
\end{align}
Similar to \eqref{eq_proof_regression_res_param}, we can construct a projection with $X = [\hat \Theta_t^\top, I_{n+n_u}]^\top$ instead of $[\hat{\bb Z}_{\symt}^\top, \Psi^\top]^\top$:
\begin{align}\label{eq_proof_res_proj_c}
    %[-I_n, \hat \Theta_\symt] \underline C_{\symi,\symt+1} = [-I_n, \hat \Theta_\symt] () (\beta \underline C_{\symi,\symt} + \tilde \Gamma_{\symt+1})
    &X \hat Z_{\symi,\symt+1} - (\beta \underline C_{\symi,\symt} + \tilde \Gamma_{\symi,\symt+1}) = 
    \\ \nonumber
    &\quad\quad \lt(I_m -  X X^\dagger_{W_{\symi\symt}} \rt) \lt( X E_z - I_m \rt) (\beta \underline C_{\symi,\symt} + \tilde \Gamma_{\symi,\symt+1}).
\end{align}
Moreover, because of the last step of \cref{algo_airls},
%\begingroup\setlength{\belowdisplayskip}{-3pt}
\begin{align}\nonumber
\|[-I_n, \hat \Theta_\symt] \underline C_{\symi,\symt+1}\|_2^2 &= \lt\|\hat \Theta_\symt \hat Z_{\symi,\symt+1} \!-\! E_y (\beta \underline C_{\symi,\symt} \!+\! \tilde \Gamma_{\symi,\symt+1})\rt\|_2^2 
\\ \nonumber
&\leq \lt\|\!\lt[ \matb \hat \Theta_\symt \hat Z_{\symi,\symt+1} \!-\! E_y (\beta \underline C_{\symi,\symt} \!+\! \tilde \Gamma_{\symi,\symt+1}) \\
\hat Z_{\symi,\symt+1} \!-\! E_z (\beta \underline C_{\symi,\symt} \!+\! \tilde \Gamma_{\symi,\symt+1}) \mate \rt]\!\rt\|_2^2
\\
&\leq \lt\|\! \lt[\matb \hat \Theta_\symt \\ I_{n+n_u} \mate \rt] \hat Z_{\symi,\symt+1} \!-\! (\beta \underline C_{\symi,\symt} \!+\! \tilde \Gamma_{\symi,\symt+1}) \rt\|_2^2\!\!.
\label{eq_proof_res_ineq_1}
\end{align}
%\endgroup
Combining \eqref{eq_proof_res_ineq_1} with the projection \eqref{eq_proof_res_proj_c} yields
\begin{align}\label{eq_proof_res_ineq_2}
\|[-I_n, \hat \Theta_\symt] \underline C_{\symi,\symt+1}\|_2^2 &\leq \|[-I_n, \hat \Theta_\symt] (\beta \underline C_{\symi,\symt} + \Gamma_{\symi,\symt+1})\|_2^2 
\end{align}
because $( X E_z - I_m ) = \lt[ \matb [-I_n, \hat \Theta_\symt] \\ \bb 0_{(n+n_u) \times m} \mate \rt]$. Hence, $\|R(\hat \Theta_\symt, \underline C_{\symt+1})\|_2^2 \leq \|R(\hat \Theta_\symt, \beta \underline C_\symt + \tilde \Gamma_{\symt+1})\|_2^2$.

Finally, to prove that $\|R(\hat \Theta_{\symt+1}, \underline C_{\symt+1})\|_2^2 \leq \|R(\hat \Theta_{\symt}, \underline C_\symt)\|_2^2$ we need $\|[-I_n, \hat \Theta_\symt] (\beta \underline C_{\symt} + \Gamma_{\symt+1})\|_F^2 \leq \|[-I_n, \hat \Theta_\symt] \underline C_{\symt}\|_F^2 $, which is true if 
\begin{align}\label{eq_proof_res_ineq_3}
\|[-I_n, \hat \Theta_\symt] \Gamma_{\symt+1}\|_F^2 \leq (1-\beta^2)\|[-I_n, \hat \Theta_\symt] \underline C_{\symt}\|_F^2.
\end{align}
The assumption that $\tilde \Gamma_\symt \preceq \gamma_{\mt{max}} I_m$ and that $\underline{\bar \Gamma}_{\symt} \succeq \gamma_{\mt{min}} I_m$ ensure \eqref{eq_proof_res_ineq_3} if
\begingroup\setlength{\abovedisplayskip}{-1pt}
\begin{align}\label{eq_proof_beta_bound}
    \frac{\gamma_{\mt{max}}^2}{\gamma_{\mt{min}}^2} \leq \frac{1 - \beta^2}{(1-\beta)^2},
\end{align}
\endgroup
which is guaranteed if $1 - \beta \leq \gamma_{\mt{max}}^{-2} \gamma_{\mt{min}}^{2}$. The function $\|R(\hat \Theta_{\symt}, \underline C_\symt)\|_2^2$ is therefore decreasing and lower bounded, which proves the theorem
\end{proof}

\section{Numerical experiments}\label{section_results}

In this section, we will compare the parameter estimates and state predictions in the asymptotic regime using AIRLS with $\Psi = 10^{-3} I_8$ and $\mu = \bb 0_8$, the EKF from \cite{ekf_kullberg2021online}, the RTLS from \cite{rhode2014_rgtls} and subspace identification. For the latter, we use the batch method provided by the function \emph{n4sid} in MATLAB \cite{ref_n4sid}.\footnote{Batch estimation is expected to outperform any recursive implementation for the same sample size.}
We use the system
\begin{align}
    x_{\symt+1} = \lt[\matb 0.8 & -0.25 \\ -0.25 & 0.25 \mate\rt]x_{\symt} + \lt[\matb 10 & 2 \\ 2 & 10 \mate\rt] u_{\symt},
\end{align}
with a random persistent excitation $u \sim \mc N(\bb 0_2,0.01 I_2)$. We add weak Gaussian measurement noise with a signal to noise ratio of 100 to all samples, and much stronger noise (uniformly distributed in $[-0.2, 0.2]$) for a small portion of randomly chosen samples, varying between $0.02\%$ and $5\%$ of all samples. The points affected by the strong noise are outliers. For each proportion of outliers, we average the estimates of 10 different experiments.

\cref{fig:errs} shows the relative Frobenius error
\begin{align}
    \epsilon_F = \frac{\|[A,B] - [\hat A_N, \hat B_N]\|_F}{\|[A,B]\|_F}, \; N=50000,
\end{align}
for all 4 methods and various proportions of outliers. We observe that both subspace identification and EKF get large errors with as low as 0.1\% outliers. This means that they may perform poorly even with the help of an outlier detection system that is not 100\% accurate. The RTLS is much more robust, but still performs much worse than AIRLS. \cref{fig:preds} shows the state estimation for $\symt > 50000$, i.e. when the parameters have converged. The error on parameters manifests as excessive smoothing of the state estimation.

We conclude by highlighting that with standard Gaussian noise and no outliers, all methods achieve similar performance, and that without any noise, all methods have 100\% accuracy.

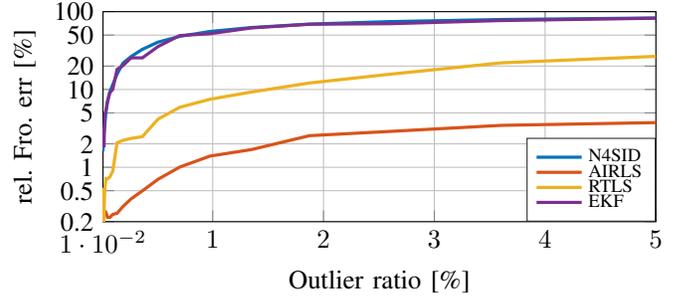
\begin{figure}[t!]
    \centering
    % This file was created by matlab2tikz.
%
%The latest updates can be retrieved from
%  http://www.mathworks.com/matlabcentral/fileexchange/22022-matlab2tikz-matlab2tikz
%where you can also make suggestions and rate matlab2tikz.
%
\definecolor{mycolor1}{rgb}{0.00000,0.44700,0.74100}%
\definecolor{mycolor2}{rgb}{0.85000,0.32500,0.09800}%
\definecolor{mycolor3}{rgb}{0.92900,0.69400,0.12500}%
\definecolor{mycolor4}{rgb}{0.49400,0.18400,0.55600}%
\begin{tikzpicture}

\begin{axis}[%
width=0.83\columnwidth,
height=1.1in,
scale only axis,
xmin=0.01,
xmax=5,
xtick={  0.01, 1,   2,   3,  4,  5},
xlabel={Outlier ratio [\%]},
ymode=log,
ymin=0.2,
ymax=100,
ylabel={rel. Fro. err [\%]},
ytick={  0.2, 0.5, 1,   2,   5,  10,  20,  50, 100},
yticklabels={  0.2, 0.5, 1,   2,   5,  10,  20,  50, 100},
yminorticks=true,
axis background/.style={fill=white},
xmajorgrids,
ymajorgrids,
yminorgrids,
legend style={at={(0.766,0)}, anchor=south west, legend cell align=left, align=left, draw=white!15!black, row sep=-0.16cm, font=\scriptsize}
]
\addplot [color=mycolor1, line width=1.2pt]
  table[row sep=crcr]{%
0.01	1.61910051435368\\
0.0138691885653003	2.03359101645307\\
0.0192354391459856	2.80972253691918\\
0.0266779932652033	3.52929626621682\\
0.0370002119138915	5.14651220171146\\
0.0513162915989831	7.07300623334697\\
0.0711715324658232	9.2344696354313\\
0.0987091404249892	11.5766761587765\\
0.136901568167288	15.721174746537\\
0.189871366379743	21.7339429160201\\
0.263336178347187	26.4849203779372\\
0.365225911356268	32.7716586410863\\
0.506538703353372	40.6005929714368\\
0.702528079243062	48.146853777119\\
0.974349440344024	55.6387898925703\\
1.35134361166261	62.5672122521583\\
1.87420393666626	69.2211505514845\\
2.59936878074525	74.5267141660979\\
3.60511357709105	79.2663635040718\\
5	82.9600474415179\\
};
\addlegendentry{N4SID}

\addplot [color=mycolor2, line width=1.2pt]
  table[row sep=crcr]{%
0.01	0.291410112203469\\
0.0138691885653003	0.280375670172788\\
0.0192354391459856	0.266955715636834\\
0.0266779932652033	0.273438727777797\\
0.0370002119138915	0.263220402652873\\
0.0513162915989831	0.22732793680007\\
0.0711715324658232	0.226061504781267\\
0.0987091404249892	0.248228000136179\\
0.136901568167288	0.256001099248256\\
0.189871366379743	0.312388941735475\\
0.263336178347187	0.392219983426649\\
0.365225911356268	0.497251792607906\\
0.506538703353372	0.700422214791405\\
0.702528079243062	1.0054673677974\\
0.974349440344024	1.3910220190101\\
1.35134361166261	1.69061063175565\\
1.87420393666626	2.54809311237077\\
2.59936878074525	2.88972048397072\\
3.60511357709105	3.45756267597291\\
5	3.75213714133917\\
};
\addlegendentry{AIRLS}

\addplot [color=mycolor3, line width=1.2pt]
  table[row sep=crcr]{%
0.01	0.545828268074622\\
0.0138691885653003	0.166137701319153\\
0.0192354391459856	0.253407575221162\\
0.0266779932652033	0.52554244924441\\
0.0370002119138915	0.715666343814436\\
0.0513162915989831	0.71017359850859\\
0.0711715324658232	0.748240580874207\\
0.0987091404249892	0.893349627527041\\
0.136901568167288	2.06411064378967\\
0.189871366379743	2.21884210684622\\
0.263336178347187	2.34800591297986\\
0.365225911356268	2.47486518570053\\
0.506538703353372	4.15161763837868\\
0.702528079243062	5.89953634581419\\
0.974349440344024	7.47793356198546\\
1.35134361166261	9.28056253935202\\
1.87420393666626	12.1097658642429\\
2.59936878074525	15.7020382559705\\
3.60511357709105	21.9812421655024\\
5	26.6514223468153\\
};
\addlegendentry{RTLS}

\addplot [color=mycolor4, line width=1.2pt]
  table[row sep=crcr]{%
0.01	3.20209628090576\\
0.0138691885653003	2.3930207973149\\
0.0192354391459856	1.81155476730571\\
0.0266779932652033	4.88764065516469\\
0.0370002119138915	5.37199190267368\\
0.0513162915989831	6.98128932024942\\
0.0711715324658232	9.02856031002161\\
0.0987091404249892	10.0536816596031\\
0.136901568167288	18.1082040278798\\
0.189871366379743	20.2399996210048\\
0.263336178347187	25.497830279423\\
0.365225911356268	25.4734871181505\\
0.506538703353372	35.5746378471082\\
0.702528079243062	49.2999404411758\\
0.974349440344024	52.0227380169181\\
1.35134361166261	61.833955235916\\
1.87420393666626	68.893413374871\\
2.59936878074525	70.0328179012107\\
3.60511357709105	76.880617737516\\
5	82.3608269849719\\
};
\addlegendentry{EKF}

\end{axis}
\end{tikzpicture}%
    \caption{Relative Frobenius error of the parameter estimates for various methods using data with a proportion of outliers up to 5\%. The vertical axis is in log scale.}
    \label{fig:errs}   
\end{figure}
\begin{figure}[t!]
    \centering
    \includegraphics[width=\columnwidth]{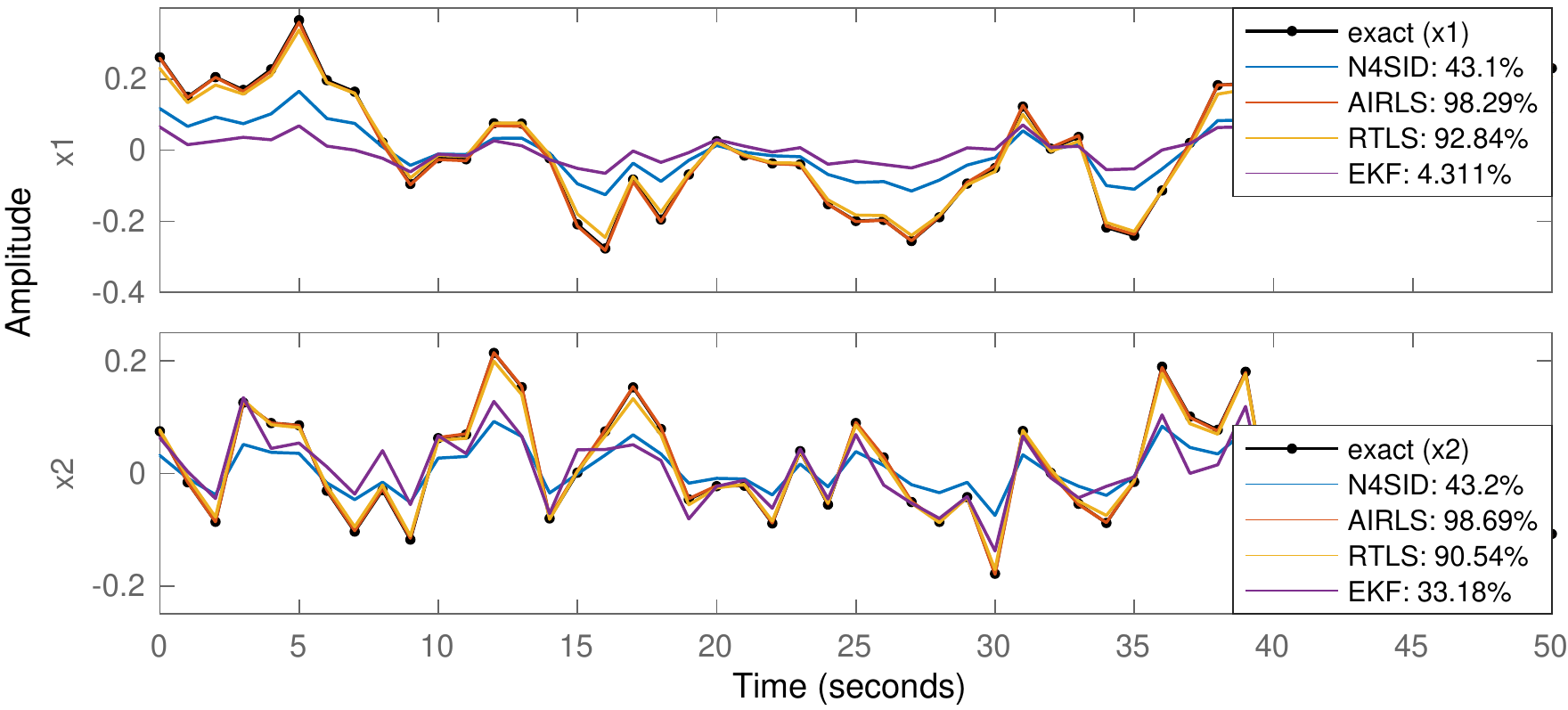}
    \caption{State estimation after $N$ steps using various methods with 1\% outliers. The plot of the exact state is overlapping with the one of the AIRLS estimate.}
    \label{fig:preds}
\end{figure}

\section{Conclusions}\label{section_conclu}

We show that AIRLS, an algorithm that combines recursive, alternating, and iteratively-reweighted least squares, converges and allows one to perform robust and online joint state/input and parameter estimation for linear systems. Numerical experiments show that the accuracy of the AIRLS estimates is higher than state-of-the-art methods in the presence of outliers.

Future work includes extending AIRLS to more general loss functions and noise distributions. Practical applications, including power systems and self driving cars will also be addressed.

\bibliographystyle{IEEEtran}
\bibliography{references}

\end{document}